\colorlet{dblue}{blue!40!black}
\newcommand{\comment}[1]{}
\newcommand{\msf}{\mathsf}
\renewcommand{\mit}{\mathit}
\newcommand{\ap}{{\scriptsize@}}
\newcommand{\nix}[1]{}
\newtheorem{notation}[theorem]{Notation}
\newcommand{\pairlft}{{\langle}}
\newcommand{\pairrgt}{{\rangle}}
\newcommand{\pairsep}{{,\,}}
\newcommand{\pairstr}[1]{\pairlft#1\pairrgt}
\newcommand{\pair}[2]{\pairstr{#1\pairsep#2}}
\newcommand{\sBNFis}{{{:}{:}{=}}}
\newcommand{\BNFor}{\mathrel{|}}
\newcommand{\coBNFis}{\mathrel{\sBNFis^{\text{co}}}}
\newcommand{\atrs}{\mathcal{R}}
\newcommand{\sred}{{\rightarrow}}
\newcommand{\red}{\mathrel{\sred}}
\newcommand{\rerat}[2]{\mathrel{\sred_{#1,#2}}}
\newcommand{\ieq}{=^\infty}
\newcommand{\ieqdown}{\stackrel{\makebox(0,0){$\smallsmile$}}{=}^\infty}
\newcommand{\ired}{\mathrel{\mbox{$\to\hspace{-2.8mm}\to\hspace{-2.8mm}\to$}}}
\newcommand{\iredi}{\mathrel{\reflectbox{$\ired$}}}
\newcommand{\ireddown}{\rightharpoondown\hspace{-2.8mm}\rightharpoondown\hspace{-2.8mm}\rightharpoondown}
\newcommand{\ireddownfin}{\rightharpoondown\hspace{-2.8mm}\stackrel{<}{\rightharpoondown}\hspace{-2.8mm}\rightharpoondown}
\newcommand{\rstep}{\to_{\varepsilon}}
\newcommand{\ibi}{\mathrel{{}^\infty\!\!\to^\infty}}
\newcommand{\ibidown}{\mathrel{{}^\infty\!\!\stackrel{\makebox(0,0){$\smallsmile$}}{\to}^\infty}}
\newcommand{\redord}{\to_{\textit{ord}}}
\newcommand{\iredord}{\ired_{\textit{ord}}}
\newcommand{\iredordr}[1]{\ired_{\textit{ord},#1}}
\newcommand{\iredorddown}{\ireddown_{\textit{ord}}}
\newcommand{\fun}[1]{\mathsf{#1}}
\newcommand{\relcomp}{\circ}
\newcommand{\nat}{\mathbb{N}}
\newcommand{\sarity}{\#}
\newcommand{\arity}[1]{\sarity(#1)}
\newcommand{\avars}{\mathcal{X}}
\newcommand{\vars}[1]{\mathcal{V}\hspace{-.1ex}\mit{ar}(#1)}
\newcommand{\ster}{\mathit{Ter}}
\newcommand{\ter}[2]{\ster(#1,#2)}
\newcommand{\siter}{\ster^{\infty}}
\newcommand{\iter}[2]{\siter(#1,#2)}
\newcommand{\ametric}{d}
\newcommand{\metric}[2]{\ametric(#1,#2)}
\newcommand{\posemp}{\epsilon}
\newcommand{\apos}{p}
\newcommand{\pos}[1]{\mathcal{P}\!os(#1)}
\newcommand{\subtrm}[2]{#1|_{#2}}
\newcommand{\asubst}{\sigma}
\newcommand{\slfp}{\mu}
\newcommand{\lfp}[2]{\slfp {#1}.\,#2}
\newcommand{\sgfp}{\nu}
\newcommand{\gfp}[2]{\sgfp {#1}.\,#2}
\newcommand{\down}[1]{\overline{#1}}
\newcommand{\id}{\msf{id}}
\newcommand{\tlat}{L}
\newcommand{\Pow}{\mathcal{P}}
\newcommand{\rsplit}{\ensuremath{\msf{split}}}
\newcommand{\rlift}{\ensuremath{\msf{lift}}}
\newcommand{\rid}{\ensuremath{\msf{id}}}
\newcommand{\nest}{\textit{nest}}
\newcommand{\fp}{\textit{fp}}
\newcommand{\boxx}[1]{\colorbox[rgb]{0.99,0.78,0.07}{\kern0.15em#1\kern0.15em}\quad}
\newcommand{\hole}{\raisebox{-2pt}{\scalebox{.7}[1.5]{$\Box$}}}
\begin{document}

\title{A Coinductive Treatment of Infinitary Rewriting}

\author{J\"{o}rg~Endrullis\inst{1} \and 
  Helle~Hvid~Hansen\inst{2,3} \and
  Dimitri~Hendriks\inst{1} \and\\
  Andrew Polonsky\inst{1} \and
  Alexandra Silva\inst{2}
}
   
\authorrunning{Endrullis \and Hansen \and Hendriks \and Polonsky \and Silva}

\institute{VU University Amsterdam, Department of Computer Science
  \and
  Radboud University Nijmegen, Department of Computer Science
  \and
  Centrum Wiskunde \& Informatica
  \and
  VU University Amsterdam, Department of Computer Science
}

\maketitle

\begin{abstract}
  
We present a coinductive treatment 
of infinitary term rewriting with reductions of arbitrary ordinal length.
Our framework allows the following succinct 
definition of the infinitary rewrite relation $\ired$:
\begin{align*}
  {\ired} \;\;=\;\; \lfp{x}{\gfp{y}{(\rstep \cup \mathrel{\down{x}})^*\relcomp \down{y}}}
\end{align*}
where $\down{R} = \{\,\pair{f(s_1,\ldots,s_n)}{\,f(t_1,\ldots,t_n)} \mid s_1 \mathrel{R} t_1,\ldots,s_n \mathrel{R} t_n\,\} \,\cup\, \id$
and $\rstep$ are root steps.
Here $\slfp$ is the least fixed point operator and $\sgfp$ is the greatest 
fixed point operator.
\medskip

In contrast to the usual definitions of infinitary rewriting,
our setup has neither need for ordinals nor for metric convergence.
This makes the framework especially suitable for formalizations in theorem provers.
On the basis of the above definition we  provide a proof of the Compression Lemma in the Coq theorem prover.
\medskip

Finally, we present our coinductive framework in the form of coinductive proof rules,
giving rise to proof terms for infinite reductions. 

\end{abstract}

\section{Introduction}\label{sec:intro}

Infinitary rewriting is a generalization of the ordinary finitary rewriting
to infinite terms and infinite reductions (including reductions of ordinal lengths larger than $\omega$).
We present a coinductive treatment of infinitary rewriting
free of ordinals, metric convergence and partial orders which have been
essential in earlier definitions of the 
concept~\cite{ders:kapl:plai:1991,kenn:1992,kenn:klop:slee:vrie:1995a,Dezani-CiancagliniSV03,klop:vrij:2005,kenn:klop:slee:vrie:1997,kenn:vrie:2003,KennawaySSV05,kahr:2013,bahr:2010,bahr:2010b,bahr:2012,endr:hend:klop:2012}.

Let us give the idea. 
Let $R$ be a term rewriting system (TRS).
We write $\rstep$ for root steps with respect to $R$\,:
${\rstep} = \{\,(\ell\sigma,r\sigma) \mid \ell \to r \in R,\; \text{$\sigma$ a substitution}\,\}$.
The crucial ingredient of our definition of infinitary rewriting $\ired$ are the \emph{coinductive} rules
\begin{align}
  \begin{aligned}
  \infer=
  {s \ired t}
  {s \mathrel{(\rstep \cup \ireddown)^*} t}
  &\hspace{2cm}&
  \infer=
  {f(s_1,s_2,\ldots,s_n) \ireddown f(t_1,t_2,\ldots,t_n)}
  {s_1 \ired t_1 & \ldots & s_n \ired t_n}
  \end{aligned}
  \label{rules:main}
\end{align}
Here $\ired$ and $\ireddown$ 
stand for finite and infinite reductions
where $\ireddown$ contains only steps below the root.
The coinductive nature of the rules means that the proof terms 
need not be well-founded.
%

\begin{example}\label{ex:a:Ca}
  Let $R$ be the TRS consisting solely of the following rewrite rule
  {\abovedisplayskip.75ex 
   \belowdisplayskip.75ex
  \begin{align*}
    \fun{a} \to \fun{C}(\fun{a})
  \end{align*}}%
  We write $\fun{C}^\omega$ to denote the infinite term $\fun{C}(\fun{C}(\fun{C}(\ldots)))$,
  the solution of the equation $\fun{C}^\omega = \fun{C}(\fun{C}^\omega)$.
  We then have $\fun{a} \ired \fun{C}^\omega$, that is, an infinite reduction from $\fun{a}$ to $\fun{C}^\omega$ in the limit:
  {\abovedisplayskip-.5ex 
   \belowdisplayskip.75ex
  \begin{align*}
    \fun{a} \to \fun{C}(\fun{a}) \to \fun{C}(\fun{C}(\fun{a})) \to \fun{C}(\fun{C}(\fun{C}(\fun{a}))) \to \ldots 
    \to^\omega \fun{C}^\omega
  \end{align*}}%
  Using the rules above, we can derive $\fun{a} \ired \fun{C}^\omega$ as shown in Figure~\ref{fig:aComega}.
  \noindent
  This is an infinite proof tree as indicated by the loop 
  \raisebox{.5ex}{\tikz \draw [->,thick,dotted] (0,0) -- (5mm,0mm);}
  in which the rewrite sequence 
  $\fun{a} \rstep \fun{C}(\fun{a}) \ireddown \fun{C}^\omega$
  is written in the form
  $\fun{a} \rstep \fun{C}(\fun{a}) \quad \fun{C}(\fun{a})\ireddown \fun{C}^\omega$, that is,
  two separate steps such that the target of the first equals the source of the second step;
  this is made precise in Notation~\ref{not:transitivity}, below.
\end{example}\vspace{-2ex}

  \begin{wrapfigure}{r}{4.5cm}
    \vspace{-7ex}
    \begin{framed}
    \vspace{-1.5ex}
    \begin{align*}
      \infer=
      {\fun{a} \ired \fun{C}^\omega}
      {
        \fun{a} \rstep \fun{C}(\fun{a})
        &
        \infer=
        {\fun{C}(\fun{a}) \ireddown \fun{C}^\omega}
        {\infer={\fun{a} \ired \fun{C}^\omega}
            {\makebox(0,0){
              \hspace{13mm}\begin{tikzpicture}[baseline=11ex]
              \draw [->,thick,dotted] (0,0) -- (0,1mm) to[out=90,in=100] (11mm,-1mm) to[out=-80,in=-20,looseness=1.4] (-0mm,-13mm);
              \end{tikzpicture}
            }}
        }
      }
    \end{align*}
    \vspace{-6ex}
    \end{framed}
    \vspace{-4ex}
    \caption{A reduction $\fun{a} \ired \fun{C}^\omega$.}
    \vspace{-3ex}
    \label{fig:aComega}
  \end{wrapfigure}
  Put in words, the proof tree in Figure~\ref{fig:aComega} can be described as follows.
  We have an infinitary rewrite sequence~$\ired$ from $\fun{a}$ to $\fun{C}^\omega$
  since we have a root step from $\fun{a}$ to $\fun{C}(\fun{a})$, and
  an infinitary reduction below the root $\ireddown$ from $\fun{C}(\fun{a})$ to $\fun{C^\omega}$.
  The latter reduction $\fun{C}(\fun{a}) \ireddown \fun{C^\omega}$ is in turn witnessed
  by the infinitary rewrite sequence $\fun{a} \ired \fun{C}^\omega$ 
  on the direct subterms.

\begin{notation}\label{not:transitivity}
  Instead of introducing derivation rules for transitivity,
  in particular for
  $\mathrel{(\rstep \cup \ireddown)^*}$,
  we write rewrite sequences 
  $s_0 \rightsquigarrow_0 s_1 \rightsquigarrow_1 \ldots \rightsquigarrow_{n-1} s_n$
  where ${\rightsquigarrow_i} \in \{\rstep, \ireddown\}$
  as sequence of single steps: 
  {\abovedisplayskip2ex 
   \belowdisplayskip.75ex
  \begin{align*}
    \infer=
    {s_0 \ired s_n}
    {s_0 \rightsquigarrow_0 s_1 \quad s_1 \rightsquigarrow_1 s_2 \quad\ldots\quad s_{n-1} \rightsquigarrow_{n-1} s_n}
  \end{align*}}%
  \noindent
  This notation is more convenient since it avoids the need for 
  explicitly introducing rules for transitivity,
  and thereby keeps the proof trees small.
\end{notation}

As a second example, let us consider a rewrite sequence of length beyond $\omega$.

\begin{example}\label{ex:fab}
  We consider the term rewriting system with the following rules:
  {\abovedisplayskip.75ex 
   \belowdisplayskip.75ex
  \begin{align*}
    \fun{f}(x,x) &\to \fun{D} & 
    \fun{a} &\to \fun{C}(\fun{a}) &
    \fun{b} &\to \fun{C}(\fun{b})
  \end{align*}}%
  Then we have the following reduction of length $\omega+1$:
  {\abovedisplayskip.75ex 
   \belowdisplayskip.75ex
  \begin{align*}
    \fun{f}(\fun{a},\fun{b}) 
    \to \fun{f}(\fun{C}(\fun{a}),\fun{b}) 
    \to \fun{f}(\fun{C}(\fun{a}),\fun{C}(\fun{b}))
    \to \ldots \to^\omega \fun{f}(\fun{C}^\omega,\fun{C}^\omega)
    \to \fun{D}
  \end{align*}}%
  That is, after an infinite rewrite sequence of length $\omega$, 
  we reach the limit term $\fun{f}(\fun{C}^\omega,\fun{C}^\omega)$,
  and we then continue with a rewrite step from $\fun{f}(\fun{C}^\omega,\fun{C}^\omega)$ to $\fun{D}$.
\end{example}
  \vspace{-2ex}

  \begin{wrapfigure}{r}{7.4cm}
    \vspace{-7ex}
    \begin{framed}
    \vspace{-1.5ex}
    \begin{align*}
      \infer=
      {\fun{f}(\fun{a},\fun{b}) \ired \fun{D}}
      {
        \infer=
        {\fun{f}(\fun{a},\fun{b}) \ireddownfin \fun{f}(\fun{C}^\omega,\fun{C}^\omega)}
        {
          \infer=
          {\fun{a} \ired \fun{C}^\omega}
          {\text{like Figure~\ref{fig:aComega}}}
          &
          \infer=
          {\fun{b} \ired \fun{C}^\omega}
          {\text{like Figure~\ref{fig:aComega}}}
        }
        & 
        \fun{f}(\fun{C}^\omega,\fun{C}^\omega) \rstep \fun{D}
      }
    \end{align*}
    \vspace{-5.5ex}
    \end{framed}
    \vspace{-4ex}
    \caption{A reduction $\fun{f}(\fun{a},\fun{b}) \ired \fun{D}$.}
    \vspace{-7ex}
    \label{fig:fab}
  \end{wrapfigure}

  Figure~\ref{fig:fab} shows how this rewrite sequence \mbox{$\fun{f}(\fun{a},\fun{b}) \ired \fun{D}$}
  can be derived in our setup.
  The precise meaning of the symbol $\ireddownfin$ in the figure will
  be explained later; 
  for the moment, we may think of $\ireddownfin$ to be $\ireddown$.
  We note that the rewrite sequence $\fun{f}(\fun{a},\fun{b}) \ired \fun{D}$
  cannot be `compressed' to length $\omega$. That is,
  there exists no reduction $\fun{f}(\fun{a},\fun{b}) \to^{\le \omega} \fun{D}$.

For the definition of rewrite sequences of ordinal length,
there is a design choice concerning the connectedness at limit ordinals:
(a) metric convergence, or (b) strong convergence.
The purpose of the connectedness condition is to exclude jumps at limit ordinals, 
as illustrated in the non-connected rewrite sequence 
  {\abovedisplayskip.75ex 
   \belowdisplayskip.75ex
  \begin{align*}
    \underbrace{\fun{a} \to \fun{a} \to \fun{a} \to \ldots}_{\text{$\omega$-many steps}} \;\fun{b} \to \fun{b}
  \end{align*}}%
where $R = \{\,\fun{a} \to \fun{a},\, \fun{b}\to\fun{b}\,\}$.
The rewrite sequence stays $\omega$ steps at $\fun{a}$ and in the limit step `jumps' to $\fun{b}$.

The connectedness condition with respect to \emph{metric convergence} requires
that for every limit ordinal~$\gamma$,
the terms $t_\alpha$ converge with limit $t_\gamma$ as $\alpha$ approaches $\gamma$ from below. 
The \emph{strong convergence} requires additionally that
the depth of the rewrite steps $t_{\alpha} \to t_{\alpha+1}$ tends to infinity 
as $\alpha$ approaches $\gamma$ from below.
The standard notion of infinitary rewriting~\cite{tere:2003,endr:hend:klop:2012} is based on strong convergence
as it gives rise to a more elegant rewriting theory;
for example, allowing to trace symbols and redexes over limit ordinals.
This is the notion that we are concerned with in this paper.

The rules~\eqref{rules:main} give rise to infinitary rewrite sequences in a very natural way,
without the need for ordinals, metric convergence, or depth requirements.
The depth requirement in the definition of strong convergence
arises naturally in the rules~\eqref{rules:main} by employing coinduction over the term structure.
Indeed, it is not difficult to see that the coinductive rules~\eqref{rules:main} capture all 
infinitary strongly convergent reductions $s \ired t$.
This is a consequence of a result due to~\cite{kenn:klop:slee:vrie:1995a}
which states that every strongly convergent rewrite sequence
contains only a finite number of steps at any depth $d \in \nat$
and in particular only a finite number of root~steps.
Hence every strongly convergent reduction is of the form ${(\ireddown \relcomp \rstep)^*} \relcomp \ireddown$.

While this argument shows that every strongly convergent reduction \mbox{$s \ired t$}
can be derived using the rules~\eqref{rules:main},
it does not guarantee that we can derive precisely the strongly convergent reductions.
Actually, the rules do allow to derive more, as the following example shows.
\begin{example}\label{ex:Ca:a}
  Let $R$ consist of the rewrite rule
  $\fun{C}(\fun{a}) \to \fun{a}$.
  Using the rules~\eqref{rules:main}, we can derive 
  $\fun{C}^\omega \ired \fun{a}$ as shown in Figure~\ref{fig:backwards}.
\end{example}
  \vspace{-1.5ex}

  \begin{wrapfigure}{r}{5cm}
    \vspace{-8ex}
    \begin{framed}
    \vspace{-1.5ex}
    \begin{align*}
      \infer=
      {\fun{C}^\omega \ired \fun{a}}
      {
        \infer=
        {\fun{C}^\omega \ireddownfin \fun{C}(\fun{a})}
        {\infer={\fun{C}^\omega \ired \fun{a}}
            {\makebox(0,0){
              \hspace{-13mm}\begin{tikzpicture}[baseline=12ex]
              \draw [->,thick,dotted] (0,0) -- (0,1mm) to[out=90,in=80] (-11mm,-1mm) to[out=-100,in=-160,looseness=1.6] (-0mm,-13mm);
              \end{tikzpicture}
            }}
        }
        &
        \fun{C}(\fun{a}) \rstep \fun{a}
      }
    \end{align*}
    \vspace{-6ex}
    \end{framed}
    \vspace{-4ex}
    \caption{Derivation of $\fun{C}^\omega \ired \fun{a}$.}
    \vspace{-5ex}
    \label{fig:backwards}
  \end{wrapfigure}
  With respect to the standard notion of infinitary rewriting $\ired$ in the literature
  we do not have $\fun{C}^\omega \ired \fun{a}$ since $\fun{C}^\omega$ is a normal form
  (does not contain an occurrence of the left-hand side $\fun{C}(\fun{a})$ of the rule).
  Note that the rule $\fun{C}(\fun{x}) \to \fun{x}$ also gives rise to $\fun{C}^\omega \ired \fun{a}$
  by the same derivation as in Figure~\ref{fig:backwards}.

This example illustrates that, without further restrictions, the rules~\eqref{rules:main}
give rise to a notion of infinitary rewriting
that allows rewrite sequences to extend infinitely forwards, but also infinitely backwards;
we call this \emph{bi-infinite rewriting}.
Here backwards does \emph{not} refer to reversing the arrow $\leftarrow_{\varepsilon}$.
By replacing $\rstep$ with $\leftarrow_{\varepsilon} \cup \rstep$ in the first rule,
we obtain a theory of \emph{infinitary equational reasoning}.
This notion of infinitary equational reasoning
has the property of strong convergence built in,
and thereby allows to trace redex occurrences forwards as well as backwards.
Due to space limitations, we leave the investigation of these concepts to future work.

The focus of this paper is the standard notion of infinitary rewriting.
\emph{How to obtain the strongly convergent rewrite sequences $s \ired t$?}
For this purpose it suffices to impose a syntactic restriction on the shape of the proof trees 
obtained from the rules~\eqref{rules:main}.
The idea is that all rewrite sequences $\ireddown$ in $(\rstep \cup \ireddown)^*$, that
are before a root step $\rstep$, should be shorter than the rewrite sequence that we are defining.
To this end, we change $(\rstep \cup \ireddown)^*$ to $\mathrel{(\rstep \cup \ireddownfin)^*} \relcomp \ireddown$
where $\ireddownfin$ is a marked equivalent of $\ireddown$,
and we employ the marker to exclude infinite nesting of $\ireddownfin$.
Then we have an infinitary strongly convergent rewrite sequence from $s$ to $t$ 
if and only if $s \ired t$ can be derived by the rules
\begin{align}
  \begin{aligned}
  \infer=
  {s \ired t}
  {s \mathrel{(\rstep \cup \ireddownfin)^*} \relcomp \ireddown t}
  &\hspace{.4cm}&  
  \infer=
  {f(s_1,s_2,\ldots,s_n) \stackrel{(<)}{\ireddown} f(t_1,t_2,\ldots,t_n)}
  {s_1 \ired t_1 & \ldots & s_n \ired t_n}
  &\hspace{.4cm}&
  \infer=
  {s \stackrel{(<)}{\ireddown} s}
  {}
  \end{aligned}
  \label{rules:restrict}
\end{align}
in a (not necessarily well-founded) proof tree without infinite nesting of $\ireddownfin$.
In other words,  we only allow those proof trees 
in which all paths (ascending through the proof tree) contain only
finitely many occurrences of $\ireddownfin$.

We note that the second and third rule are abbreviations for two rules
each:
the symbol $\stackrel{(<)}{\ireddown}$ stands for $\ireddown$ and for $\ireddownfin$.
Intuitively, $\ireddownfin$ can be thought of as infinitary rewrite sequence 
below the root that is `smaller' than the sequence we are defining.
Here `smaller' refers to the nesting depth of $\ireddownfin$,
but can equivalently be thought of the length of the reduction (in some well-founded order).
\begin{example}
  Let us revisit Examples~\ref{ex:a:Ca}, \ref{ex:fab} and \ref{ex:Ca:a}.
  Example~\ref{ex:a:Ca} contains no occurrences of $\ireddownfin$.
  The proof tree in Example~\ref{ex:fab} has a single occurrence of
  $\ireddownfin$, but this occurrence is not contained in the indicated loops, and thus not infinitely nested.
  Only Example~\ref{ex:Ca:a} contains a symbol $\ireddownfin$ on a
  loop, and hence a path with infinitely many occurrences of $\ireddownfin$,
  and thus the proof tree is excluded by the syntactic restriction.
  \qed
\end{example}

\subsection*{Related Work}
While the basic idea of a coinductive treatment of infinitary rewriting is not new~\cite{coqu:1996,joac:2004,endr:polo:2012b},
%
%
the previous approaches have in common that they do not
capture rewrite sequences of length $> \omega$. 
The coinductive treatment presented here captures all strongly
convergent rewrite sequences of arbitrary ordinal length.

From the topological perspective, various notions of infinitary rewriting 
and infinitary equational reasoning have been studied in~\cite{kahr:2013}.
We note that none of the rewrite notions 
considered in this paper are continuous (forward closed)
in general. Here continuity of $\to$ means that $\lim_{i\to\infty} t_i = t$ and $\forall i.{s \to t_i}$ implies $s \to t$.
However, continuity might hold for certain classes of term rewrite systems;
see further~\cite{endr:hend:klop:2012} for continuity 
in strongly convergent infinitary rewriting $\ired$.


\subsection*{Outline}
In Section~\ref{sec:prelims} we introduce infinitary rewriting in the usual way
with ordinal-length rewrite sequences, and convergence at every limit ordinal.
We then continue in Section~\ref{sec:coinduction}
with an introduction to coinduction.
We give two definitions of infinitary rewriting based on
mixing induction and coinduction in Section~\ref{sec:itrs}.
In Section~\ref{sec:discussion} we illustrate that our framework
is suitable for formalizations in theorem provers.
In Section~\ref{sec:equivalence},
we prove the equivalence of our coinductive definitions of infinitary rewriting
with the standard definition.

\section{Preliminaries}\label{sec:prelims}

We give a brief introduction to infinitary rewriting.
For further reading on infinitary rewriting we refer to
\cite{klop:vrij:2005,tere:2003,bare:klop:2009,endr:hend:klop:2012},
for an introduction to finitary rewriting to
\cite{klop:1992,tere:2003,baad:nipk:1998,bare:1977}.

A \emph{signature $\Sigma$} is a set of symbols $f$ each having a fixed arity $\arity{f} \in \nat$.
Let $\avars$ be an infinite set of variables such that $\avars \cap \Sigma = \emptyset$.
The set of (finite and) \emph{infinite terms $\iter{\Sigma}{\avars}$}
over $\Sigma$ and $\avars$ is \emph{coinductively} (see further~\cite{barr:1993}) defined by the grammar:
$
  T \coBNFis x \BNFor f(\underbrace{T,\ldots,T}_{\text{$\arity{f}$ times}}) \; \text{($x \in \avars$, $f \in \Sigma$)}
  \label{grammar:term}
$
Intuitively, coinductively means that the grammar rules may be applied
an infinite number of times.
The equality on the terms is bisimilarity.
For a brief introduction to coinduction, we refer to Section~\ref{sec:coinduction}.

We define the \emph{identity relation on terms} by $\id = \{\pair{s}{s} \mid s \in \iter{\Sigma}{\avars}\}$.

\begin{remark}
Alternatively, the infinite terms arise from the set of finite terms, $\ter{\Sigma}{\avars}$, 
by metric completion,
using the well-known distance function $d$ such that for $t,s \in \ter{\Sigma}{\avars}$, 
$d(t,s) = 2^{-n}$ if the $n$-th level of the terms $t,s$ (viewed as labeled trees) 
is the first level where a difference appears, 
in case $t$ and $s$ are not identical; furthermore, $d(t,t) = 0$.
It is standard that this construction yields $\pair{\ter{\Sigma}{\avars}}{d}$ as a metric space. 
Now infinite terms are obtained by taking the completion of this metric space, 
and they are represented by infinite trees. 
We will refer to the complete metric space arising in this way as $\pair{\iter{\Sigma}{\avars}}{d}$, 
where $\iter{\Sigma}{\avars}$ is the set of finite and infinite terms over~$\Sigma$.
\end{remark}

Let $t \in \iter{\Sigma}{\avars}$ be a finite or infinite term.
The set of \emph{positions $\pos{t}\subseteq \nat^*$ of $t$} is
defined by: $\posemp \in \pos{t}$ and 
$i\vec{p} \in \pos{t}$ whenever $t = f(t_1,\ldots,t_n)$ 
with $1\le i\le n$ and $\vec{p} \in \pos{t_i}$.
For $p \in \pos{t}$, the \emph{subterm $\subtrm{t}{p}$ of $t$ at position $p$}
is defined by
$\subtrm{t}{\posemp} = t$ and
$\subtrm{f(t_1,\ldots,t_n)}{ip} = \subtrm{t_i}{p}$.
%
The set of \emph{variables $\vars{t}\subseteq \avars$ of $t$} is 
$\vars{t} = \{x \in \avars \mid \exists \, p\in \pos{t}.\,\subtrm{t}{p} = x\}$.

A \emph{substitution $\asubst$} is a map $\asubst : \avars \to \iter{\Sigma}{\avars}$.
We extend the domain of substitutions~$\asubst$ to 
$\iter{\Sigma}{\avars}$ by coinduction, as follows:
$\asubst(f(t_1,\ldots,t_n)) = f(\asubst(t_1),\ldots,\asubst(t_n))$.
For terms $s$ and substitutions $\asubst$, we write $s\sigma$ for $\sigma(s)$.
We write $x \mapsto s$ for the substitution defined by
$\asubst(x) = s$ and $\asubst(y) = y$ for all $y \ne x$.
Let $\hole$ be a fresh variable.
A \emph{context} $C$ is a term $\iter{\Sigma}{\avars \cup \{\hole\}}$
containing precisely one occurrence of the variable $\hole$.
For contexts $C$ and terms $s$ we write $C[s]$ for $C(\hole \mapsto s)$.

A \emph{rewrite rule $\ell \to r$} over $\Sigma$ and $\avars$ is a pair 
$(\ell,r) \in \iter{\Sigma}{\avars} \times \iter{\Sigma}{\avars}$
of terms such that the left-hand side $\ell$ is not a variable ($\ell \not\in \avars$),
and all variables in the right-hand side $r$ occur in $\ell$ ($\vars{r} \subseteq \vars{\ell}$).
Note that we do neither require the left-hand side nor the right-hand side of a rule
to be finite.

A \emph{term rewrite system (TRS) $\atrs$} over $\Sigma$ and $\avars$
is a set of rewrite rules over $\Sigma$ and $\avars$.
A TRS induces a rewrite relation on the set of terms as follows.
For $\apos \in \nat^\ast$ we define ${\rerat{\atrs}{\apos}} \subseteq \iter{\Sigma}{\avars} \times \iter{\Sigma}{\avars}$, a \emph{rewrite step at position $\apos$}, by
$
  C[\ell\sigma] \rerat{\atrs}{\apos} C[r\sigma]
$
if $C$ a context with $\subtrm{C}{\apos} = \hole$,\; $\ell \to r \in \atrs$,\; $\sigma : \avars \to \iter{\Sigma}{\avars}$.
We write $s \to_{\atrs} t$ if $s \rerat{\atrs}{\apos} t$ for some $\apos\in\nat^\ast$.
A \emph{normal form} is a term without a redex occurrence,
that is, a term that is not of the form $C[\ell\sigma]$ for some context $C$, 
rule $\ell \to r\in \atrs$ and substitution $\sigma$.

A natural consequence of this construction is the emergence of the notion of \emph{metric convergence}: 
we say that $t_0 \to t_1 \to t_2 \to \cdots$ is an infinite reduction sequence with limit $t$, 
if $t$ is the limit of the sequence $t_0,t_1,t_2, \ldots$ in the usual sense of metric convergence. 
Metric convergence is sometimes also called \emph{weak convergence}.
In fact, we will use throughout a stronger notion that has better properties. 
This is \emph{strong convergence}, which in addition to the stipulation for metric (or weak) convergence, 
requires that the depth of the redexes contracted in the 
successive steps tends to infinity when approaching a
limit ordinal from below.
So this rules out the possibility that the action of redex contraction stays confined at the top, 
or stagnates at some finite level of depth. 


\begin{definition}\label{def:itrs:standard}
  A \emph{transfinite rewrite sequence} (of ordinal length $\alpha$)
  is a sequence of rewrite steps 
  $(t_{\beta} \rerat{\atrs}{\apos_{\beta}} t_{\beta+1})_{\beta < \alpha}$
  such that for every limit ordinal $\lambda < \alpha$ we have that 
  if $\beta$ approaches $\lambda$ from below, then
  \begin{enumerate}[(i)]
    \item\label{item:distance}
      the distance $\metric{t_\beta}{t_\lambda}$ tends to $0$ 
      and, moreover,
    \item\label{item:depth}
      the depth of the rewrite action, i.e., 
      the length of the position $\apos_\beta$, 
      tends to infinity.
  \end{enumerate}

  The sequence is called \emph{strongly convergent} 
  if $\alpha$ is a successor ordinal, 
  or there exists a term $t_\alpha$ such that
  the conditions~\ref{item:distance} and~\ref{item:depth}
  are fulfilled for every limit ordinal $\lambda \leq \alpha$.
  In this case we write $t_0\iredordr{\atrs} t_\alpha$, 
  or $t_0\red^{\alpha} t_\alpha$ 
  to explicitly indicate the length $\alpha$ of the sequence.
  The sequence is called \emph{divergent} if it is not strongly convergent.
\end{definition}

There are several reasons why strong convergence is beneficial; 
the foremost being that in this way we can define the notion of \emph{descendant} 
(also \emph{residual}) over limit ordinals. 
Also the well-known Parallel Moves Lemma
and the Compression Lemma
fail for weak convergence, see~\cite{simo:2004} and \cite{ders:kapl:plai:1991} respectively.

\section{Introduction to Coinduction}\label{sec:coinduction}

We briefly introduce the relevant concepts from
(co)algebra and (co)induction that will be used later throughout this
paper. 
For a more thorough introduction, we refer to \cite{jaco:rutt:2011}.
There will be two main points where coinduction will play a role, in the definition of terms and in the definition 
of the term rewriting. 

Terms are usually defined with respect to a signature $\Sigma$. 
For instance, consider the type of lists with elements in a given set $A$. 
\begin{verbatim}
  type List a = Empty | Cons a (List a)
\end{verbatim}
The above grammar corresponds to the signature (or type constructor)
$\Sigma(X) = 1 + A \times X$ where the $1$ is used as a placeholder for the empty list {\tt Empty} and the second component represents the {\tt Cons} constructor.
Such a grammar can be interpreted in two ways: 
The \emph{inductive} interpretation yields as terms the set of finite lists,
and corresponds to the \emph{least fixed point} of $\Sigma$.
The \emph{coinductive} interpretation yields as terms
the set of all finite or infinite lists,
and corresponds to the \emph{greatest fixed point} of $\Sigma$.
More generally, the inductive interpretation of a signature 
yields finite terms (with well-founded syntax trees), and
dually, the coinductive interpretation 
yields possibly infinite terms.
For readers familiar with the categorical definitions of algebras
and coalgebras, these two interpretations amount to defining
finite terms as the \emph{initial $\Sigma$-algebra}, 
and possibly infinite terms as the \emph{final $\Sigma$-coalgebra}.

Equality on finite terms is the expected syntactic/inductive definition.  
Equality of possibly infinite terms is observational equivalence (or bisimilarity). For instance, in the above example, two infinite lists $\sigma$ and $\tau$ are equal if and only if they are related by a List-bisimulation. A relation $R\subseteq \mathtt{List\ a} \times \mathtt{List\ a}$ is a List-bisimulation if and only if for all pairs $(\mathtt{Cons\ a\ }\sigma, \mathtt{Cons\ b\ }\tau)\in R$, it holds that $a=b$ and $(\sigma,\tau) \in R$.

Formally, term rewriting is a relation on a set $T$ of terms,
and hence an element of the complete lattice $\tlat := \Pow(T \times T)$,
i.e., the powerset of $T \times T$.
Relations on terms can thus be defined using least and greatest fixed points
of monotone operators on $\tlat$.
In this setting, an inductively defined relation is 
a least fixed point $\mu F$ of a monotone $F \colon \tlat \to \tlat$;
and dually, a coinductively defined relation is
a greatest fixed point $\nu F$ of a monotone $F \colon \tlat \to \tlat$.
These notions of induction and coinduction are, in fact, also instances of 
the more abstract categorical definitions. This can be seen by viewing 
$\tlat$ as a partial order (ordered by set inclusion). In turn, a partial order
$(P,\leq)$ can be seen as a category whose objects are the elements of $P$ 
and there is a unique arrow $X \to Y$ if $X \leq Y$. 
A functor on $(P,\leq)$ is then nothing but a monotone map $F$;
an $F$-coalgebra $X \to F(X)$ is a post-fixed point of $F$; and
a final $F$-coalgebra is a greatest fixed point of $F$.
The existence of the final $F$-coalgebra is guaranteed by 
the Knaster-Tarski fixed point theorem.
Coinduction, and similarly induction, can now be formulated as proof rules:
\begin{equation}\label{eq:coind-ind-rules}
\frac{X \leq F(X)}{X \leq \nu F}(\nu\text{-rule}) \qquad\qquad 
\frac{F(X) \leq X}{\mu F \leq X}(\mu\text{-rule})
\end{equation}
that express the fact that $\nu F$ is the greatest post-fixed point of $F$,
and $\mu F$ is the least pre-fixed point of $F$.

\section{Infinitary Equational Reasoning}\label{sec:ieq}

From the basic rules~\eqref{rules:main} 
arises in a very natural way a novel notion of infinitary equational reasoning $\ieq$.
%
This notion is the natural counterpart of strongly convergent infinitary rewriting.
Like infinitary strongly convergent reductions,
the theory of infinitary equational reasoning has the property that
every derivation contains only a finite number of reasoning steps at any depth $d \in \nat$.
We consider an \emph{equational specification (ES)} as a TRS.

\begin{samepage}
\begin{definition}\label{def:ieq:rules}
  Let $E$ be an equational specification over $\Sigma$.
  We define \emph{infinitary equational reasoning} \emph{${=^\infty} \subseteq T \times T$} 
  on terms $T = \iter{\Sigma}{\avars}$ by the following coinductive rules
  \begin{align*}
    \infer=
    {s \ieq t}
    {s \mathrel{(\leftarrow_{\varepsilon} \cup \rstep \cup \ieqdown)^*} t}
    &&  
    \infer=
    {f(t_1,t_2,\ldots,t_n) \ieqdown f(t'_1,t'_2,\ldots,t'_n)}
    {t_1 \ieq t'_1 & \ldots & t_n \ieq t'_n}
  \end{align*}
  where ${\ieqdown} \subseteq T \times T$ stands for infinitary equational reasoning below the root.
  \qed
\end{definition}
\end{samepage}
  
\begin{example}\label{ex:ieq}
  Let $E$ be an equational specification consisting of the equations (rules):
  \begin{align*}
    \fun{a} &= \fun{f}(\fun{a})  &
    \fun{b} &= \fun{f}(\fun{b}) &
    \fun{C}(\fun{b}) &= \fun{C}(\fun{C}(\fun{a})) &
  \end{align*}
  Then $\fun{a} \ieq \fun{b}$ as derived in Figure~\ref{fig:ieq:fagb} (top),
  and $\fun{C}(\fun{a}) \ieq \fun{C}^\omega$ as in Figure~\ref{fig:ieq:fagb} (bottom).
  \begin{figure}[h]
    \begin{gather*}
    \infer=
    { \fun{a} \ieq \fun{b} }
    {
      \fun{a} \to_{\varepsilon} \fun{f}(\fun{a})
      &
      \infer=
      {\fun{f}(\fun{a}) \ieqdown \fun{f}^\omega}
      {
        \infer=
        {\fun{a} \ieq \fun{f}^\omega}
        {
          \fun{a} \to_{\varepsilon} \fun{f}(\fun{a})
          &
          \infer=
          {\fun{f}(\fun{a}) \ieqdown \fun{f}^\omega}
          {\infer={\fun{a} \ieq \fun{f}^\omega}
              {\makebox(0,0){
                \hspace{13mm}\begin{tikzpicture}[baseline=11ex]
                \draw [->,thick,dotted] (0,0) -- (0,1mm) to[out=90,in=100] (11mm,-1mm) to[out=-80,in=-20,looseness=1.4] (-0mm,-13mm);
                \end{tikzpicture}
              }}
          }
        }
      }
      &&
      \infer=
      {\fun{f}^\omega \ieqdown \fun{f}(\fun{b})} 
      {
        \infer=
        {\fun{f}^\omega \ieq \fun{b}}
        {
          \infer=
          {\fun{f}^\omega \ieqdown \fun{f}(\fun{b})}
          {\infer={\fun{f}^\omega \ieq \fun{b}}
              {\makebox(0,0){
                \hspace{-13mm}\begin{tikzpicture}[baseline=11ex]
                \draw [->,thick,dotted] (0,0) -- (0,1mm) to[out=90,in=80] (-10mm,-1mm) to[out=-100,in=-160,looseness=1.6] (-0mm,-12mm);
                \end{tikzpicture}
              }}
          }
          &
          \fun{f}(\fun{b}) \leftarrow_{\varepsilon} \fun{b}
        }
      }
      &
      \fun{f}(\fun{b}) \leftarrow_{\varepsilon} \fun{b}
    }
    \\
    \infer=
    {\fun{C}(\fun{a}) \ieq \fun{C}^\omega}
    {
      \infer=
      {\fun{C}(\fun{a}) \ieqdown \fun{C}(\fun{b})} 
      {\infer={a \ieq b}{\text{(as above)}}}
      &
      \fun{C}(\fun{b}) \rstep \fun{C}(\fun{C}(\fun{a}))
      &
      \infer=
      {\fun{C}(\fun{C}(\fun{a})) \ieqdown \fun{C}^\omega}
      {\infer={\fun{C}(\fun{a}) \ieq \fun{C}^\omega}
            {\makebox(0,0){
              \hspace{18mm}\begin{tikzpicture}[baseline=12.5ex]
              \draw [->,thick,dotted] (0,0) -- (0,1mm) to[out=90,in=100] (15mm,-1mm) to[out=-80,in=-25,looseness=1.4] (-0mm,-14mm);
              \end{tikzpicture}
            }}
      }
    }
    \end{gather*}\vspace{-5ex}
    \caption{Infinitary equational reasoning.}
    \label{fig:ieq:fagb}
  \end{figure}
  \\It is easy to see that ${(\iredi \relcomp \ired)^*} \subseteq {\ieq}$,
  and $\fun{C}(\fun{a}) \ieq \fun{C}^\omega$ shows that this inclusion is strict.
\end{example}

Definition~\ref{def:ieq:rules} of $\ieq$ can be equivalently be defined using 
a greatest fixed point as follows.

\begin{definition}\label{def:ieq:fixedpoint}
  Let $E$ be an equational specification over $\Sigma$,
  and $T= \iter{\Sigma}{\avars}$.
  For $R \in \Pow(T \times T)$, we define its \emph{lifting} as
  \begin{align*}
    \down{R} &= \{\,\pair{f(s_1,\ldots,s_n)}{\,f(t_1,\ldots,t_n)} \mid s_1 \mathrel{R} t_1,\ldots,s_n \mathrel{R} t_n\,\}
                \,\cup\, \id 
  \end{align*}
  We define the relation $\ieq$ as $\gfp{x}{(\leftarrow_{\varepsilon} \cup \rstep \cup \mathrel{\down{x}})^*}$.
  \qed
\end{definition}

It is easy to verify that the function 
$x \mapsto (\leftarrow_{\varepsilon} \cup \rstep \cup \mathrel{\down{x}})^*$
is monotone, and consequently the greatest fixed point in Definition~\ref{def:ieq:fixedpoint}
exists. 

Another notion that arises naturally in our setup
is that of bi-infinite rewriting,
allowing rewrite sequences to extend infinitely forwards and backwards.
We emphasize that each of the steps $\rstep$ 
in such sequences is a forward step.

\begin{definition}\label{def:ibi:rules}
  Let $R$ be a term rewriting system over $\Sigma$,
  and let $T= \iter{\Sigma}{\avars}$.
  We define \emph{bi-infinite rewrite relation} \emph{${\ibi} \subseteq T \times T$} 
  by the following coinductive rules
  \begin{align*}
    \infer=
    {s \ibi t}
    {s \mathrel{(\rstep \cup \ibidown)^*} t}
    &&  
    \infer=
    {f(t_1,t_2,\ldots,t_n) \ibidown f(t'_1,t'_2,\ldots,t'_n)}
    {t_1 \ibi t'_1 & \ldots & t_n \ibi t'_n}
  \end{align*}
  where ${\ibidown} \subseteq T \times T$ stands for bi-infinite rewriting below the root.
  \qed
\end{definition}

Examples~\ref{ex:a:Ca},~\ref{ex:fab} and~\ref{ex:Ca:a} are illustrations of this rewrite relation.
Note that these examples employ the symbols $\ired$ and $\ireddown$ instead of $\ibi$ and $\ibidown$, respectively.
In the infinitary conversion in Example~\ref{ex:ieq} 
we need to reverse the rule $\fun{b} = \fun{f}(\fun{b})$
in order to obtain a bi-infinite rewrite sequence $\fun{a} \ibi \fun{b}$.

\section{Infinitary Term Rewriting}\label{sec:itrs}

We present two -- ultimately equivalent -- definitions of infinitary rewriting \mbox{$s \ired t$},
based on mixing induction and coinduction.
We summarize the definitions:
\begin{enumerate}[A.]
  \item \emph{Derivation Rules.}
        First, we define $s \ired t$ via a syntactic restriction on the proof trees 
        that arise from the coinductive rules~\eqref{rules:restrict}.
        The restriction excludes all proof trees that contain ascending paths
        with an infinite number of marked symbols.
  \medskip
  \item \emph{Mixed Induction and Coinduction.}
        Second, we define $s \ired t$ based on mutually mixing induction and coinduction,
        that is, least fixed points $\mu$ and greatest fixed points $\nu$.
\end{enumerate}
\noindent
In contrast to previous coinductive definitions~\cite{coqu:1996,joac:2004,endr:polo:2012b}, 
the setup proposed here captures all strongly convergent rewrite sequences (of arbitrary ordinal length).

Throughout this section, we fix a signature $\Sigma$ and a term
rewriting system $\atrs$ over $\Sigma$. The notation $\rstep$ denotes a root step with respect to $\atrs$.

\subsection{Derivation Rules}
The first definition has already been discussed in the introduction.
The strongly convergent rewrite sequences are obtained by 
a syntactic restriction on the formation of the proof trees that arise from rules~\eqref{rules:main}.

\begin{definition}\label{def:ired:restrict}
  We define the relation \emph{$\ired$} on terms $T= \iter{\Sigma}{\avars}$ as follows.
  We have $s \ired t$ if there exists a 
  (finite or infinite) proof tree $\delta$ deriving $s \ired t$ using the rules
  \begin{align*}
    \infer=[\rsplit]
    {s \ired t}
    {s \mathrel{(\rstep \cup \ireddownfin)^*} \relcomp \ireddown t}
    &&  
    \infer=[\rlift]
    {f(s_1,s_2,\ldots,s_n) \stackrel{(<)}{\ireddown} f(t_1,t_2,\ldots,t_n)}
    {s_1 \ired t_1 & \ldots & s_n \ired t_n}
    &&
    \infer=[\rid]
    {s \stackrel{(<)}{\ireddown} s}
    {}
  \end{align*}
  such that $\delta$ does not contain infinite nesting\footnote{%
    No infinite nesting of $\ireddownfin$ means that there exists no path ascending 
    through the proof tree that meets an infinite number of symbols $\ireddownfin$.
  } %
  of $\ireddownfin$. 
  The symbol $\stackrel{(<)}{\ireddown}$ stands for $\ireddown$ or $\ireddownfin$;
  so the second rule is an abbreviation for two rules; similarly for the third rule.
  \qed
\end{definition}

Let us give some intuition for the rules in Definition~\ref{def:ired:restrict}.
The relation $\ireddownfin$ can be thought of as 
an infinitary reduction below the root, 
that is `shorter' than the reduction that we are deriving.
The three rules (\rsplit, \rlift\ and \rid) can be interpreted as follows:
\begin{enumerate}[(i)]
  \item 
    The \rsplit-rule:
    the term $s$ rewrites infinitarily to $t$, $s \ired t$, 
    if $s$ rewrites to $t$ using a finite sequence of (a) root steps,
    and (b) infinitary reductions $\ireddown$ below the root
    (where infinitary reductions preceding root steps must be shorter than the derived reduction).
    \smallskip
  \item 
    The \rlift-rule: 
    the term $s$ rewrites infinitarily to $t$ below the root, $s \stackrel{(<)}{\ireddown} t$,
    if the terms are of the shape $s = f(t_1,t_2,\ldots,t_n)$ and $t = f(t'_1,t'_2,\ldots,t'_n)$
    and there exist reductions $\ired$ on the arguments:
    $t_1 \ired t'_1$, \ldots, $t_n \ired t'_n$.
    \smallskip
  \item The \rid-rule
    allows the rewrite relation $\stackrel{(<)}{\ireddown}$ to be reflexive,
    and this in turn yields reflexivity of $\ired$.
    For variable-free terms, reflexivity can already be derived using the first two rules.
    However, for terms with variables, this third rule is needed 
    (unless we treat variables as constant symbols).
    \smallskip
\end{enumerate}
For example proof trees using the rules from Definition~\ref{def:ired:restrict}, 
we refer to Examples~\ref{ex:a:Ca} and~\ref{ex:fab} in the introduction. 

\subsection{Mixed Induction and Coinduction}

%
The next definition is based on mixing induction and coinduction. 
The inductive part is used to model the restriction 
to finite nesting of $\ireddownfin$ in the proofs in Definition~\ref{def:ired:restrict}.
The induction corresponds to a least fixed point $\slfp$,
while a coinductive rule to a greatest fixed point $\sgfp$.

\begin{definition}\label{def:ired:fixedpoint}
  Let $T= \iter{\Sigma}{\avars}$ be the set of terms,
  and let $L$ be the set of all relations on terms $L = \Pow(T\times T)$.
  For $R \in \Pow(T \times T)$, we define its \emph{lifting}~as
  \begin{align*}
    \down{R} &= \{\,\pair{f(s_1,\ldots,s_n)}{\,f(t_1,\ldots,t_n)} \mid s_1 \mathrel{R} t_1,\ldots,s_n \mathrel{R} t_n\,\}
                \,\cup\, \id 
  \end{align*}
  We define the relation $\ired$ by
  \begin{equation*}
      {\ired} \quad = \quad \lfp{x}{\gfp{y}{(\rstep \cup \mathrel{\down{x}})^*\relcomp \down{y}}}\\[-5ex]
  \end{equation*}
  \qed
\end{definition}
Define functions $G \colon L \times L \to L$ and $F \colon L \to L$ by
\begin{equation*}
  G(x,y) = (\rstep \cup \mathrel{\down{x}})^*\relcomp (\down{y})
    \quad\text{ and }\quad
  F(x) = \gfp{y}{G(x,y)} = \gfp{y}{(\rstep \cup \mathrel{\down{x}})^*\relcomp (\down{y})}
\end{equation*}
Then  
    ${\ired} \;=\; \lfp{x}{F(x)} \;=\; \lfp{x}{\gfp{y}{G(x,y)}} \;=\; 
    \lfp{x}{\gfp{y}{(\rstep \cup \mathrel{\down{x}})^*\relcomp \down{y}}}$
It can easily be verified that $F$ and $G$ are monotone (in all their arguments).
Recall that a function $f$ over sets is monotone if $X \subseteq Y \implies f(\ldots,X,\ldots) \subseteq f(\ldots,Y,\ldots)$.
Hence $F$ and $G$ have unique least and greatest fixed points.

The reflexive, transitive closure $(\cdot)^*$ in Definition~\ref{def:ired:fixedpoint} can, of course,
also be defined using a least fixed point, for example, as follows:
\begin{align*}
  R^* &= \lfp{z}{(\id \cup R  \relcomp z}) 
  &\text{or equivalently }&&
  R^* &=\lfp{z}{(\id \cup R  \cup z \relcomp z})
\end{align*}
Unfolding this definition of the reflexive, transitive closure in Definition~\ref{def:ired:fixedpoint}
we obtain:
\(
{\ired} = \lfp{x}{\gfp{y}{(\lfp{z}{\id \cup (\rstep \cup \mathrel{\down{x}}) \relcomp z})\relcomp \down{y}}}
\)\;.

\paragraph*{Comparing Definitions~\ref{def:ired:restrict} and~\ref{def:ired:fixedpoint}}

We emphasize the close connection between Definitions~\ref{def:ired:restrict} and~\ref{def:ired:fixedpoint}.
Observe that the clause $(\rstep \cup \mathrel{\down{x}})^*\relcomp (\down{y})$ 
in Definition~\ref{def:ired:fixedpoint} 
models $\mathrel{(\rstep \cup \ireddownfin)^*} \relcomp \ireddown$ 
in the first rule of Definition~\ref{def:ired:restrict}.
Here $\down{x}$ corresponds to $\ireddownfin$, and $\down{y}$ to $\ireddown$.
The least fixed point $\slfp{x}$ caters for the restriction of the proof tree formation
to finite nesting of $\ireddownfin$. 

Definitions~\ref{def:ired:restrict} and~\ref{def:ired:fixedpoint}
of the rewrite relation $\ired$ both have their merits. 
Definition~\ref{def:ired:fixedpoint}, which is based on mixing induction and coinduction, 
is a succinct, mathematically precise formulation of $\ired$. 
The derivation rules, Definition~\ref{def:ired:restrict},
on the other hand, are easy to understand, and easy to use for humans.

\section{A Formalization in Coq}\label{sec:discussion}

The standard definition of infinitary rewriting,
using ordinal length rewrite sequences and strong convergence at limit ordinals,
is difficult to formalize. 
The coinductive framework we propose, is easy to formalize and work with in theorem provers.
For example, in Coq, the coinductive definition of 
infinitary strongly convergent reductions can be defined as follows:
{\small
\begin{verbatim}
Inductive ired : relation term :=
  | Ired : 
      forall R I : relation term,
      subrel I ired ->
      subrel R ((root_step (+) lift I)* ;; lift R) ->
      subrel R ired.
\end{verbatim}}
\noindent
Here 
\verb=term= is the set of coinductively defined terms,
\verb=;;= is relation composition,
\verb=(+)= is the union of relations, 
\verb=*= the reflexive transitive closure,
\verb=lift R= is~$\down{R}$,
and \verb=root_step= is the root step relation.

Let us briefly comment on this formalization.
We have ${\ired} \;=\; \lfp{x}{\gfp{y}{G(x,y)}}$
where $G(x,y) = (\rstep \cup \mathrel{\down{x}})^*\relcomp \down{y}$.
The inductive definition of \verb=ired= corresponds to the least fixed point $\slfp{x}$.
Coq has no support for mutual inductive and coinductive definitions.
Therefore, instead of the explicit coinduction, we use the $\nu$-rule from~\eqref{eq:coind-ind-rules}.
For every relation $R$
that fulfills $R \subseteq  G(x,R)$, we have that $R \subseteq \gfp{y}{G(x,y)}$.
Moreover, we know that $\gfp{y}{G(x,y)}$ is the union of all these relations $R$.
Finally, we introduce an auxiliary relation \verb=I= 
to help Coq generate a good induction principle.
One can think of \verb=I= as consisting of those pairs
for which the recursive call to \verb=ired= is invoked.
Replacing \verb=lift I= by \verb=lift ired= is correct, 
but then the induction principle that Coq generates for \verb=ired= is useless.

On the basis of the above definition we proved the Compression Lemma in Coq,
that is, we have proven that if $s \ired t$ in a left-linear TRS, then $s \to^{\le \omega} t$.
To the best of our knowledge this is the first formal proof of this well-known lemma.
The formalization is available at \url{http://www.cs.vu.nl/~diem/coq/compression/}.

\section{Equivalence with the Standard Definition}\label{sec:equivalence}

In this section we prove the equivalence of the coinductively defined 
infinitary rewrite relations $\ired$ from
Definitions~\ref{def:ired:restrict}, and~\ref{def:ired:fixedpoint}
with the standard definition
based on ordinal length rewrite sequences with metric and strong convergence at every limit ordinal
(Definition~\ref{def:itrs:standard}).

\subsection{Derivation Rules}

Let $\ired$ be the relation defined in Definition~\ref{def:ired:restrict}.
The definition requires that the nesting structure of $\ireddownfin$
in proof trees is well-founded. As a consequence, we can associate to every proof tree
a (countable) ordinal that allows to embed the nesting structure in an order-preserving way.
We use $\omega_1$ to denote the first uncountable ordinal,
and we view ordinals as the set of all smaller ordinals
(then the elements of $\omega_1$ are all countable ordinals).

\begin{definition}
  Let $\delta$ be a proof tree as in Definition~\ref{def:ired:restrict},
  and let $\alpha$ be an ordinal.
  An \emph{$\alpha$-labeling of $\delta$} 
  is a labeling of all symbols $\ireddownfin$ in $\delta$ with elements from $\alpha$
  such that
  each label is strictly greater than all labels occurring in the subtrees (all labels above).
  \qed
\end{definition}

\begin{lemma}\label{lem:nest}
  Every proof tree as in Definition~\ref{def:ired:restrict}
  has an $\alpha$-labeling for some $\alpha \in \omega_1$.
  \qed
\end{lemma}


\begin{definition}
  Let $\delta$ be a proof tree as in Definition~\ref{def:ired:restrict}.
  We define the \emph{nesting depth} of $\delta$ as 
  the least ordinal $\alpha \in \omega_1$ such that $\delta$ admits an $\alpha$-labeling.
  For every $\alpha \le \omega_1$, we define a relation
  ${\ired_\alpha} \subseteq {\ired}$
  as follows:
  $s \ired_\alpha t$ whenever $s \ired t$
  can be derived using a proof with nesting depth $< \alpha$.
  Likewise we define relations
  ${\ireddown_\alpha}$ and
  ${\ireddownfin_\alpha}$.
  \qed
\end{definition}

As a direct consequence of Lemma~\ref{lem:nest} we have:
\begin{corollary}\label{cor:nest:omega1}
  We have ${\ired_{\omega_1}} = {\ired}$.
  \qed
\end{corollary}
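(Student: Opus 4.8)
The plan is to prove the two inclusions separately, with essentially all of the content coming for free from Lemma~\ref{lem:nest}. The inclusion ${\ired_{\omega_1}} \subseteq {\ired}$ is immediate: by construction each relation ${\ired_\alpha}$ is a subset of ${\ired}$, since $s \ired_\alpha t$ is defined to hold only when $s \ired t$ is already derivable (by a proof of bounded nesting depth). So nothing needs to be done here beyond unfolding the definition.

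For the reverse inclusion ${\ired} \subseteq {\ired_{\omega_1}}$, I would start from an arbitrary pair $s \ired t$ and fix a proof tree $\delta$ deriving it via the rules of Definition~\ref{def:ired:restrict}. The key step is to invoke Lemma~\ref{lem:nest}, which supplies an $\alpha$-labeling of $\delta$ for some countable ordinal $\alpha \in \omega_1$. This shows that the set of ordinals admitting a labeling of $\delta$ is a nonempty set of countable ordinals; its least element---the nesting depth of $\delta$---is therefore again a countable ordinal, hence strictly below $\omega_1$. Consequently $\delta$ witnesses $s \ired t$ by a proof of nesting depth $< \omega_1$, which is exactly the condition defining $s \ired_{\omega_1} t$.

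I do not expect any genuine obstacle: the statement is a direct repackaging of Lemma~\ref{lem:nest}, and the only point deserving a moment's care is that the nesting depth, defined as a least element, indeed lies in $\omega_1$. This is guaranteed precisely because Lemma~\ref{lem:nest} produces a \emph{countable} labeling ordinal, so the infimum is taken over a nonempty family of countable ordinals and stays countable. All the substantive work---showing that the well-founded nesting structure of $\ireddownfin$ in any admissible proof tree can be embedded into a countable ordinal---has already been carried out in Lemma~\ref{lem:nest}.
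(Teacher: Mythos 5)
Your proof is correct and matches the paper's intent: the corollary is stated there as a direct consequence of Lemma~\ref{lem:nest}, with exactly the reasoning you give (one inclusion is definitional, the other follows because every proof tree admits a countable labeling ordinal, so its nesting depth lies in $\omega_1$).
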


We will now show that the coinductively defined 
infinitary rewrite relation $\ired$ (Definition~\ref{def:ired:restrict})
coincides with the standard definition of $\iredord$ (Definition~\ref{def:itrs:standard})
based on ordinal length rewrite sequences with metric and strong convergence at every limit ordinal.
The crucial observation is the following theorem from~\cite{klop:vrij:2005}:
\begin{theorem}[Theorem 2 of~\cite{klop:vrij:2005}]\label{thm:finite}
  A transfinite reduction is divergent if and only if for some $N$
  there are infinitely many steps at depth $N$.
\end{theorem}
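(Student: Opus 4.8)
The plan is to prove the equivalence by reducing it to a statement purely about the depths $|\apos_\beta| \in \nat$ of the contracted redexes, and then to argue by the pigeonhole principle together with some elementary ordinal bookkeeping. Throughout, a \emph{step at depth $N$} is an index $\beta < \alpha$ with $|\apos_\beta| = N$. The first observation I would record is that condition~\ref{item:depth} (depths tending to infinity) already entails condition~\ref{item:distance} (metric convergence): if the depths of the steps approaching a limit ordinal $\lambda$ grow without bound, then for every $d \in \nat$ all steps beyond some $\beta_0 < \lambda$ act strictly below depth $d$, so the terms $t_\beta$ for $\beta \in [\beta_0,\lambda)$ agree up to depth $d$; hence $(t_\beta)_{\beta < \lambda}$ is Cauchy and its limit $t_\lambda$ exists with $\metric{t_\beta}{t_\lambda} \to 0$. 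Consequently a transfinite reduction is strongly convergent precisely when condition~\ref{item:depth} holds at every limit ordinal $\lambda \le \alpha$, and so it is divergent precisely when condition~\ref{item:depth} \emph{fails} at some limit ordinal $\lambda_0 \le \alpha$.

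For the direction from divergence to a depth hit infinitely often, I would start from such a $\lambda_0$ at which the depth condition fails. By definition this yields an $N \in \nat$ for which the set $\{\beta < \lambda_0 \mid |\apos_\beta| \le N\}$ is cofinal in $\lambda_0$, and therefore infinite. On this infinite set the depth takes values in the finite range $\{0,1,\ldots,N\}$, so by the pigeonhole principle some fixed depth $N_0 \le N$ is realized by infinitely many indices $\beta$. These are infinitely many steps at depth $N_0$, as required.

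For the converse I would argue directly on the defining condition. Suppose some depth $N$ is realized by infinitely many steps, i.e.\ the set $S := \{\beta < \alpha \mid |\apos_\beta| = N\}$ is infinite. Being an infinite set of ordinals, $S$ contains a strictly increasing sequence $\beta_0 < \beta_1 < \cdots$; let $\lambda_0 := \sup_n \beta_n$, which is a limit ordinal with $\lambda_0 \le \alpha$. The subsequence $(\beta_n)_n$ is cofinal in $\lambda_0$ and satisfies $|\apos_{\beta_n}| = N$ for all $n$, so it witnesses that the depths do not tend to infinity as one approaches $\lambda_0$ from below. Thus condition~\ref{item:depth} fails at $\lambda_0$, and by the characterization recorded above the reduction is divergent.

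The part I expect to require the most care is the ordinal bookkeeping rather than any deep rewriting content: one must verify that the accumulation point $\lambda_0$ extracted in the converse is genuinely a limit ordinal with $\lambda_0 \le \alpha$, so that condition~\ref{item:depth} is actually imposed there, and in the forward direction that ``failure of the depth condition at $\lambda_0$'' unwinds to cofinally many low-depth steps before one invokes pigeonhole. The reduction of divergence to the failure of condition~\ref{item:depth} alone — via the implication that~\ref{item:depth} forces~\ref{item:distance} — is what keeps both directions from having to reason about metric limits directly, and is the step I would take most pains to state precisely.
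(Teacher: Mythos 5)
Your proof is correct. Note that the paper itself does not prove this statement: it is imported verbatim as Theorem~2 of the cited work of Klop and de~Vrijer, so there is no in-paper argument to compare against. Your argument is the standard one and fills that gap cleanly: the reduction of divergence to failure of the depth condition alone (using completeness of $\iter{\Sigma}{\avars}$ to get condition~(i) from condition~(ii)), cofinality plus pigeonhole for the forward direction, and extraction of a limit ordinal $\lambda_0 = \sup_n \beta_n$ for the converse. The only bookkeeping remark worth adding is that under Definition~\ref{def:itrs:standard} a transfinite rewrite sequence already satisfies conditions~(i) and~(ii) at every limit $\lambda < \alpha$, so the $\lambda_0$ you produce in the converse is forced to equal $\alpha$ (and $\alpha$ is then a limit ordinal); your argument yields this automatically, since $\lambda_0 < \alpha$ would contradict the sequence being a transfinite rewrite sequence at all.
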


We are now ready to prove the equivalence of both notions:
\begin{theorem}
  We have ${\ired} = {\iredord}$.
\end{theorem}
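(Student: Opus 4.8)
The plan is to prove the two inclusions ${\ired} \subseteq {\iredord}$ and ${\iredord} \subseteq {\ired}$ separately, using the nesting-depth stratification from the preceding definitions together with Theorem~\ref{thm:finite} as the bridge between the coinductive structure and strong convergence.

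For the direction ${\ired} \subseteq {\iredord}$, I would proceed by transfinite induction on the nesting depth $\alpha$ of the proof tree witnessing $s \ired t$, using Corollary~\ref{cor:nest:omega1} to cover all cases by ranging over $\alpha \in \omega_1$. The idea is that a proof tree of nesting depth $\alpha$ decomposes, via the \rsplit-rule, as a finite alternation $s \mathrel{(\rstep \cup \ireddownfin)^*} \relcomp \ireddown t$. Each $\ireddownfin$-step strictly decreases the nesting depth, so by the induction hypothesis it corresponds to a strongly convergent reduction below the root; each root step $\rstep$ is a single ordinary step; and the final $\ireddown$-factor is handled by a separate, simultaneous induction that recurses into the \rlift-rule on the $n$ direct subterms. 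The key point is that concatenating finitely many strongly convergent subreductions, interleaved with root steps, yields a strongly convergent reduction, and that the trailing $\ireddown$ contributes a reduction all of whose steps lie strictly below the root; assembling these gives a reduction of the form ${(\ireddown \relcomp \rstep)^*} \relcomp \ireddown$, which is exactly the shape a strongly convergent reduction must have. I would verify strong convergence at each limit ordinal by checking that the depth of contracted redexes tends to infinity, which follows because the recursive structure pushes the activity strictly below the root at each descent into $\ireddown$.

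For the converse ${\iredord} \subseteq {\ired}$, I would invoke Theorem~\ref{thm:finite}: since $s \iredord t$ is strongly convergent (hence not divergent), for every depth $N$ there are only finitely many steps at depth $\le N$, and in particular only finitely many root steps. This lets me factor the reduction as $s \mathrel{(\ireddown \relcomp \rstep)^*} \relcomp \ireddown t$, peeling off the finitely many root steps and grouping the remaining steps into maximal below-root segments. Each below-root segment is itself a strongly convergent reduction that, by the definition of rewrite steps, acts within a single argument position $f(\ldots)$, so it is witnessed by strongly convergent reductions on the direct subterms to which I can apply coinduction via the \rlift-rule. The finiteness of root steps guarantees that the outermost \rsplit-application uses only a finite $(\rstep \cup \ireddownfin)^*$-prefix, and the strictly-decreasing length of the below-root subreductions (each is a proper sub-reduction, shorter in the transfinite sense) ensures the nesting of $\ireddownfin$ is well-founded, so the resulting proof tree satisfies the no-infinite-nesting restriction of Definition~\ref{def:ired:restrict}.

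\textbf{The main obstacle} I anticipate is the bookkeeping at limit ordinals in the forward direction: I must argue that the coinductive decomposition respects strong convergence, i.e.\ that the depth requirement of Definition~\ref{def:itrs:standard}\ref{item:depth} is met and not merely metric convergence. The subtlety is that the nesting-depth induction is transfinite and the reductions being assembled can have arbitrary ordinal length, so the concatenation lemma — that a transfinite concatenation of strongly convergent reductions indexed along a below-root structure is again strongly convergent, with the correct limit term — needs care to state and apply uniformly across the two mutually-recursive relations $\ired$ and $\ireddown$. In the converse direction, the corresponding delicate point is producing, from a single transfinite reduction, the simultaneous witnesses on all $n$ subterms while certifying that their combined lengths are strictly dominated, so that the well-foundedness of the $\ireddownfin$-nesting is genuinely guaranteed rather than merely plausible; here Theorem~\ref{thm:finite} does the essential work by ruling out infinitely many steps at any fixed depth.
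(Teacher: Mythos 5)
Your proposal matches the paper's proof in both directions: for ${\iredord} \subseteq {\ired}$ it uses Theorem~\ref{thm:finite} to isolate the finitely many root steps and ordinal induction on reduction length so that the below-root segments preceding root steps land in $\ireddownfin$, and for ${\ired} \subseteq {\iredord}$ it inducts on nesting depth, applies the induction hypothesis to the $\ireddownfin$ pieces, and resolves the trailing $\ireddown$ by descending through the \rlift-rule. The one caveat is that this last descent is not an induction (nothing decreases along the $\ireddown$ spine); the paper makes your ``concatenation lemma'' precise by unfolding ${\ireddown_{\alpha}} = {\down{\ired_{\alpha}\vphantom{i}}}$ repeatedly to obtain intermediate terms $s_n$ with $s_n \iredord s_{n+1}$ taking place at depth at least $n$, so that the infinite concatenation is strongly convergent with limit $t$ --- exactly the point you flagged as the main obstacle.
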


\newcommand{\redd}{\rightharpoondown}
\begin{proof}
  We write $\redd$ for steps that are not at the root,
  and $\iredorddown$ to denote a reduction $\ireddown$ without root steps.

  We begin with the direction ${\iredord} \subseteq {\ired}$.
  We show by induction on the ordinal length $\alpha$
  that we have both ${\redord^\alpha} \subseteq {\ired}$ and ${\redd^\alpha} \subseteq {\stackrel{(<)}{\ireddown}}$.
  Let $\alpha$ be an ordinal and $s,t$ terms.
%
  We proceed by coinduction on the structure of the proof tree to derive $\ired$:  
  \begin{enumerate}[(i)]
    \item 
      Assume that $s \redord^\alpha t$, that is, we have a strongly convergent reduction $\sigma$ from $s$ to $t$
      of length $\alpha$.
      By Theorem~\ref{thm:finite} the rewrite sequence $\sigma$
      contains only a finite number of root steps.
      As a consequence, $\sigma$ is of the form:
      $s \mathrel{(\rstep \cup \redd^{<\alpha})^* \circ \redd^{\le \alpha}} t$.
      Note that the reductions $\iredorddown$ preceding root steps must be shorter
      than $\alpha$ since the last root step is contracted at an index $<\alpha$ in the reduction $\sigma$.
      By induction hypothesis we have ${\redd^{<\alpha}} \subseteq {\ireddownfin}$. 
      Then $s \mathrel{(\rstep \cup \ireddownfin)^* \circ \redd^{\le \alpha}} t$.
      Hence, $s \ired t$ can be derived using the \rsplit-rule
      since by coinduction hypothesis we have ${\redd^{\le \alpha}} \subseteq {\ired}$.
      Observe that the thereby constructed proof tree for $s \ired t$
      contains no infinite nesting of $\ireddownfin$ because every marker $\ireddownfin$
      occurs in a node where the induction hypothesis has been applied.  
      An infinite nesting of markers would thus give rise to an infinite descending chain of ordinals, 
      which is impossible by well-foundedness of $\alpha$.
    \item 
      Assume that $s \redd^\alpha t$, that is, we have a strongly convergent reduction $\sigma$ 
      without root steps from $s$ to $t$ of length $\alpha$.
      Then the terms $s,t$ must be of the shape
      $s = \fun{f}(s_1,\ldots,s_n)$ and $t = \fun{f}(t_1,\ldots,t_n)$,
      and $\sigma$ can be split in reductions $s_1 \redord^{\le \alpha} t_1$, \ldots, $s_n \redord^{\le \alpha} t_n$ on the arguments.
      By (i) we have $s_1 \ired t_1$, \ldots, $s_n \ired t_n$.
      Hence by the \rlift-rule we obtain $s \ireddown t$ and $s \ireddownfin t$
      (the nesting of $\ireddownfin$ stays well-founded).
  \end{enumerate}

  \noindent
  We now show ${\ired} \subseteq {\iredord}$.
  We prove by well-founded induction on $\alpha \le \omega_1$ that
  ${\ired_{\alpha}} \subseteq {\iredord}$.
  This suffices since ${\ired} = {\ired_{\omega_1}}$.
  Let $\alpha \le \omega_1$ and assume that $s \ired_{\alpha} t$.
  Let $\delta$ be a proof tree of nesting depth $\le \alpha$ deriving $s \ired_{\alpha} t$.
  The only possibility to derive $s \ired t$ is an application of the \rsplit-rule
  with the premise $s \mathrel{(\rstep \cup \ireddownfin)^*} \relcomp \ireddown t$.
  Since $s \ired_{\alpha} t$, we have
  $s \mathrel{(\rstep \cup \ireddownfin_{\alpha})^*} \relcomp \ireddown_{\alpha} t$.
  By induction hypothesis we have 
  $s \mathrel{(\rstep \cup \iredord)^*} \relcomp \ireddown_{\alpha} t$,
  and thus
  $s \iredord \relcomp \ireddown_{\alpha} t$.
  We have ${\ireddown_{\alpha}} = {\down{\ired_{\alpha}\vphantom{i}}}$, and consequently
  $s \iredord s_1 \mathrel{\down{\ired_{\alpha}\vphantom{i}}} t$ for some term $s_1$.  
  Repeating this argument on
  $s_1 \mathrel{\down{\ired_{\alpha}\vphantom{i}}} t$, we get 
  $s \iredord s_1 \mathrel{\down{\iredord\vphantom{i}}} s_2 \mathrel{\down{\down{\ired_{\alpha}\vphantom{i}}}} t$.  
  After $n$ iterations, we obtain
  \begin{align*}
    s \iredord s_1 
    \mathrel{\down{\iredord\vphantom{i}}} s_2 
    \mathrel{\down{\down{\iredord\vphantom{i}}}} s_3
    \mathrel{\down{\down{\down{\iredord\vphantom{i}}}}} s_4
    \cdots \mathrel{({\ired_{\alpha}})^{-(n-1)}} s_n
    \mathrel{({\ired_{\alpha}})^{-n}} t
  \end{align*}
  where $({\ired_{\alpha}})^{-n}$
  denotes the $n$'th iteration of $x \mapsto \down{x}$ on $\ired_{\alpha}$.
  
  Clearly, the limit of $\{s_n\}$ is $t$.  Furthermore, each of the reductions $s_n \iredord s_{n+1}$ 
  are strongly convergent and take place at depth greater than or equal to $n$.
  Thus, the infinite concatenation of these reductions yields a strongly convergent reduction from $s$ to $t$
  (there is only a finite number of rewrite steps at any depth $n$).
\end{proof}

\subsection{Mixed Induction and Coinduction}

\begin{theorem}
  The Definitions~\ref{def:ired:restrict} and~\ref{def:ired:fixedpoint} 
  give rise to the same relation ${\ired}$.
\end{theorem}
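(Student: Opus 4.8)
The plan is to prove the two relations equal by relating the transfinite Kleene approximants of the least fixed point $\lfp{x}{F(x)}$ to the nesting-depth stratification $\ired_\alpha$ of Definition~\ref{def:ired:restrict}. Write $X_0 = \emptyset$, $X_{\alpha+1} = F(X_\alpha)$, and $X_\lambda = \bigcup_{\beta<\lambda} X_\beta$ at limits; since $F$ is monotone, $\lfp{x}{F(x)} = \bigcup_\alpha X_\alpha$. The central claim I would establish is that $X_\alpha = {\ired_\alpha}$ for every ordinal $\alpha$, proved by transfinite induction. Granting this, the base case $X_0 = \emptyset = {\ired_0}$ and the limit case $X_\lambda = \bigcup_{\beta<\lambda}\ired_\beta = \ired_\lambda$ are immediate (a proof tree has nesting depth $<\lambda$ for limit $\lambda$ exactly when it has nesting depth $<\beta$ for some $\beta<\lambda$), and the conclusion follows: by Lemma~\ref{lem:nest} every proof tree has countable nesting depth, so $\ired_\alpha = \ired_{\omega_1} = {\ired}$ for all $\alpha \ge \omega_1$ (Corollary~\ref{cor:nest:omega1}), whence $\lfp{x}{F(x)} = \bigcup_\alpha X_\alpha = \bigcup_\alpha \ired_\alpha = {\ired}$.

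The technical heart is an inner lemma isolating the behaviour of the greatest fixed point $F(x) = \gfp{y}{G(x,y)}$ for a fixed parameter $x$. I would show that $F(x)$ equals the set of pairs $(s,t)$ derivable by a (possibly infinite) proof tree built from \rsplit, \rlift, \rid, in which the $\ireddown$-reductions are derived coinductively with no restriction, while every marked step $u \ireddownfin v$ is taken as an axiom from $\down{x}$ rather than expanded. One inclusion is the $\nu$-rule from~\eqref{eq:coind-ind-rules}: the set $D$ of such derivable pairs satisfies $D \subseteq G(x,D)$, because a derivation of $s \ired t$ ends in \rsplit with premise $s \mathrel{(\rstep \cup \down{x})^*} \relcomp \ireddown t$ whose final $\ireddown$-step is a \rlift with arguments in $D$, so $\ireddown \subseteq \down{D}$; hence $D$ is a post-fixed point and $D \subseteq \gfp{y}{G(x,y)} = F(x)$. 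The converse unfolds the fixed-point identity $F(x) = (\rstep\cup\down{x})^* \relcomp \down{F(x)}$ coinductively to manufacture such a proof tree. This lemma makes precise the slogan that $\down{x}$ plays the role of $\ireddownfin$ and $\down{y}$ the role of $\ireddown$.

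With the inner lemma in hand, the successor step $X_{\alpha+1} = F(\ired_\alpha) = \ired_{\alpha+1}$ is the crux, and I expect it to be the main obstacle. Taking $x = \ired_\alpha$, the admissible marked axioms are exactly $\down{\ired_\alpha}$, i.e.\ pairs $f(\vec{u}) \ireddownfin f(\vec{v})$ whose arguments satisfy $u_i \ired_\alpha v_i$ (nesting depth $<\alpha$). Replacing each such axiom by the corresponding \rlift-node with its argument derivations turns an $F(\ired_\alpha)$-tree into an honest Definition~\ref{def:ired:restrict} proof tree, and conversely; the bookkeeping is that inserting one layer of markers on top of depth-$<\alpha$ subderivations yields exactly nesting depth $<\alpha+1$. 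The delicate point is the finiteness of $\ireddownfin$-nesting along paths: since the markers in a \rsplit premise $(\rstep\cup\down{x})^*$ are finite in number and the unmarked $\ireddown$-spine carries none, any ascending path stays on $\ireddown$-spines (meeting no marker) until it possibly branches once into a marked subderivation, which has strictly smaller, well-founded nesting depth; well-foundedness of the ordinals then forbids an infinite ascending chain of markers.

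I would also note that the same two ingredients yield a more direct mutual-inclusion proof if one prefers to bypass the approximants. The $\mu$-rule reduces $\lfp{x}{F(x)} \subseteq {\ired}$ to showing $F({\ired}) \subseteq {\ired}$, which is the inner lemma (with $x = {\ired}$) together with the path analysis guaranteeing finite $\ireddownfin$-nesting of the reconstructed tree; and ${\ired} \subseteq \lfp{x}{F(x)}$ follows by well-founded induction on nesting depth, using the inner lemma with $x = \lfp{x}{F(x)}$ together with the fixed-point identity $F(\lfp{x}{F(x)}) = \lfp{x}{F(x)}$, since the marker-subderivations of a given tree have strictly smaller nesting depth and hence, by the induction hypothesis, produce pairs in $\down{\lfp{x}{F(x)}}$.
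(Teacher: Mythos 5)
Your proposal is correct, but its primary route is genuinely different from the paper's. The paper proves the two inclusions separately and asymmetrically: ${\ired_{\fp}} \subseteq {\ired_{\nest}}$ via the $\mu$-rule, reducing to $F({\ired_{\nest}}) \subseteq {\ired_{\nest}}$, which is discharged by a single coinduction on proof trees using ${\ireddownfin_{\nest}} = \down{\ired_{\nest}\vphantom{i}}$; and ${\ired_{\nest}} \subseteq {\ired_{\fp}}$ by well-founded induction on the nesting depth $\alpha$, applying the $\nu$-rule to show ${\ired_{\alpha,\nest}} \subseteq G({\ired_{\fp}},{\ired_{\alpha,\nest}})$ after replacing each marked step by a pair in $\down{\ired_{\fp}\vphantom{i}}$ via the induction hypothesis --- exactly the ``more direct mutual-inclusion proof'' you sketch in your final paragraph. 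Your main argument instead establishes the stronger level-by-level invariant $X_\alpha = {\ired_\alpha}$ relating the transfinite Kleene approximants of $\lfp{x}{F(x)}$ to the nesting-depth strata, with your inner lemma (the greatest fixed point $F(x)$ equals non-well-founded derivability with marked axioms drawn from $\down{x}$) doing the work at successor stages. What this buys is sharper information: an exact match between the closure ordinal of $F$ and the stratification by nesting depth, from which the theorem falls out by taking unions and invoking Corollary~\ref{cor:nest:omega1}. What it costs is bookkeeping the paper avoids: you must check the limit case, justify the standard but unproved identification of $\gfp{y}{G(x,y)}$ with non-well-founded derivability, and handle the identity markers arising from $\id \subseteq \down{R}$ and the \rid-rule, which slightly perturb the exact equality at low levels (an $F(\emptyset)$-tree may contain \rid-markers and so have nesting depth $1$, so $X_1 = {\ired_1}$ holds only after discarding identity steps from \rsplit-premises). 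None of these is a gap --- each is routine --- but the paper's asymmetric argument needs only the two fixed-point rules of~\eqref{eq:coind-ind-rules} and the existing stratification, and is correspondingly leaner.
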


\begin{proof}
  To avoid confusion we write $\ired_{\nest}$ for the relation $\ired$ defined in Definition~\ref{def:ired:restrict},
  and $\ired_{\fp}$ for the relation $\ired$ defined in Definition~\ref{def:ired:fixedpoint}.
  We show ${\ired_{\nest}} = {\ired_{\fp}}$.

  We begin with ${\ired_{\fp}} \subseteq {\ired_{\nest}}$.
  Employing the $\mu$-rule from~\eqref{eq:coind-ind-rules},
  it suffices to show that $F({\ired_{\nest}}) \subseteq {\ired_{\nest}}$.
  We prove this fact by coinduction on the structure of coinductively defined proof trees (Definition~\ref{def:ired:restrict}).
  We have ${\ireddown_{\nest}} = {\ireddownfin_{\nest}} = \down{\ired_{\nest}\vphantom{i}}$, and thus 
  \begin{align*}
    F({\ired_{\nest}}) &= (\rstep \cup \mathrel{\down{\ired_{\nest}\vphantom{i}}})^* \relcomp \down{F({\ired_{\nest}})} 
    = (\rstep \cup \mathrel{\ireddownfin_{\nest}})^* \relcomp \down{F({\ired_{\nest}})}\\
    \down{F({\ired_{\nest}})} &= \id \cup \{\,\pair{f(\vec{s})}{\,f(\vec{t})} \mid \vec{s} \,\mathrel{F(\ired_{\nest})}\, \vec{t}\,\}
  \end{align*}
  where $\vec{s}$, $\vec{t}$ abbreviate $s_1,\ldots,s_n$ and $t_1,\ldots,t_n$, respectively,
  and we write $\vec{s} \mathrel{R} \vec{t}$
  if we have $s_1 \mathrel{R} t_1,\ldots,s_n \mathrel{R} t_n$.
  Now we apply the \rsplit-rule to derive 
  $(\rstep \cup \mathrel{\ireddownfin_{\nest}})^* \relcomp \down{F({\ired_{\nest}})}$
  and 
  $\down{F({\ired_{\nest}})}$
  can be derived via the \rid-rule, or the \rlift-rule;
  for the arguments $\vec{s}$, $\vec{t}$ of the \rlift-rule 
  we have by coinduction that $\vec{s} \ired_{\nest} \vec{t}$ since $\vec{s} \,\mathrel{F(\ired_{\nest})}\, \vec{t}$.
  
  We now show that ${\ired_{\nest}} \subseteq {\ired_{\fp}}$.
  We prove by well-founded induction on $\alpha \le \omega_1$ that
  ${\ired_{\alpha,\nest}} \subseteq {\ired_{\fp}}$.
  This yields the claim ${\ired_{\omega_1,\nest}} = {\ired_{\nest}}$ by Corollary~\ref{cor:nest:omega1}.
  Since $\ired_{\fp}$ is a fixed point of $F$,
  we obtain ${\ired_{\fp}} = F(\ired_{\fp})$, and since $F(\ired_{\fp})$ 
  is a greatest fixed point,
  using the $\nu$-rule from~\eqref{eq:coind-ind-rules},
  it suffices to show that $(*)$ ${\ired_{\alpha,\nest}} \subseteq G(\ired_{\fp},\ired_{\alpha,\nest})$.
  Thus assume that $s \ired_{\alpha,\nest} t$,
  and let $\delta$ be a proof tree of nesting height $\le \alpha$ deriving $s \ired_{\alpha,\nest} t$.
  The only possibility to derive $s \ired_{\nest} t$ is an application of the \rsplit-rule
  with the premise $s \mathrel{(\rstep \cup \ireddownfin_{\nest})^*} \relcomp \ireddown_{\nest} t$.
  Since $s \ired_{\alpha,\nest} t$, we have
  $s \mathrel{(\rstep \cup \ireddownfin_{\alpha,\nest})^*} \relcomp \ireddown_{\alpha,\nest} t$.
  Let $\tau$ be one of the steps $\ireddownfin_{\alpha,\nest}$ displayed in the premise.
  Let $u$ be the source of $\tau$ and $v$ the target,
  so $\tau : u \ireddownfin_{\alpha,\nest} v$.
  The step $\tau$ is derived either via the \rid-rule or the \rlift-rule.
  The case of the \rid-rule is not interesting since we then can drop $\tau$ from the premise.
  Thus let the step $\tau$ be derived using the \rlift-rule.
  Then the terms $u,v$ are of form $u = f(u_1,\ldots,u_n)$ and $v = f(v_1,\ldots,v_n)$
  and for every $1 \le i \le n$ we have $u_i \ired_{\beta,\nest} v_i$ for some $\beta < \alpha$.
  Thus by induction hypothesis we obtain $u_i \ired_{\fp} v_i$ for every $1 \le i \le n$,
  and consequently $u \mathrel{\down{\ired_{\fp}\vphantom{i}}} v$.
  We then have $s \mathrel{(\rstep \cup \mathrel{\down{\ired_{\fp}\vphantom{i}})^*}} \relcomp \ireddown_{\alpha,\nest} t$,
  and hence $s \mathrel{G(\ired_{\fp},\ired_{\alpha,\nest})} t$.
  This concludes the proof.
\end{proof}




\section{Conclusion}
We have proposed a coinductive treatment of infinitary rewriting
which, in contrast to previous coinductive treatments, captures
the full infinitary term rewriting with rewrite sequences of arbitrary ordinal length.
 
We summarise a few of the merits of our coinductive framework:
\begin{enumerate}
  \item We establish a bridge between infinitary rewriting and coalgebra.
    Both fields are concerned with infinite objects and it is interesting
    to understand their relation better.%
\medskip
  \item We give a succinct, mathematically precise definition of infinitary rewriting.%
\medskip
  \item The framework paves the way for formalizing infinitary rewriting in theorem provers
    (as illustrated by our proof of the Compression Lemma in Coq).
\medskip
  \item The coinductive derivation rules establish proof terms for infinite reductions.%
\medskip
  \item 
    From our framework arise two natural variants of infinitary rewriting
    that we believe are new
    (recall that 
    ${\ired}$ is defined by ${\ired} = \lfp{x}{\gfp{y}{(\rstep \cup \mathrel{\down{x}})^*\relcomp \down{y}}}$): 
    \begin{enumerate}[(a)]
    \smallskip
      \item bi-infinite rewriting
          ${\ibi} \;=\; \gfp{y}{(\rstep \cup \mathrel{\down{y}})}$, and
    \smallskip
      \item infinitary equational reasoning
          ${\ieq} \;=\; \gfp{y}{(\leftarrow_{\varepsilon} \cup \rstep \cup \mathrel{\down{y}})}$. 
    \medskip
    \end{enumerate}
    As a consequence of the coinduction over the term structure,
    these notions have the strong convergence built in,
    and thus can profit from the well-developed techniques (such as tracing)
    in infinitary rewriting.
\medskip
  \item Our work is also a case study on mixed inductive/coinductive definitions.%
\end{enumerate}
Concerning the proof terms for infinite reductions, let us mention that 
an alternative approach has been developed in parallel by Lombardi, R\'{\i}os and de~Vrijer~\cite{lomb:ros:vrij:2013}.
While we focus on proof terms for the reduction relation, 
they use proof terms for modeling the fine-structure of 
the infinite reductions themselves.

Our work lays the foundation for several directions of future research:
\begin{enumerate}
  \item The revealed connection between infinitary rewriting and coalgebra
    provides the basis for a deeper study of the relation of both fields. 
\medskip
  \item 
    The concepts of bi-infinite rewriting and infinitary equational reasoning are novel.
    It is interesting to study these concepts,
    in particular since the theory of infinitary equational reasoning is still underdeveloped.
    For example, it would be interesting to compare the 
    Church-Rosser properties 
    \begin{align*}
      {\ieq} \subseteq {\ired \relcomp \iredi} &&\text{ and }&& {(\iredi \relcomp \ired)^*} \subseteq {\ired \relcomp \iredi}
    \end{align*}
    In the extended version~\cite{endr:hans:hend:polo:silv:2013} of the present paper, 
    we have shown that ${\ieq} \subsetneq {(\ired \cup \iredi)^*}$.%
\medskip
  \item 
    The formalization of our framework in Coq and the proof of the Compression Lemma 
    are but the first steps towards the formalization of all major theorems in infinitary rewriting.
\medskip
  \item It is interesting to investigate whether and how the coinductive framework
    can be extended to other notions of infinitary rewriting,
    for example reductions 
    where root-active terms are mapped to $\bot$ in the limit~\cite{bahr:2010,bahr:2010b,bahr:2012,endr:hend:klop:2012}.
\end{enumerate}

\subsection*{Acknowledgments}
We thank Patrick Bahr and Jeroen Ketema for fruitful discussions 
and comments to earlier versions of this paper.

\bibliographystyle{plain}
\bibliography{main}

\end{document}